\newcommand{\exist}{{\sc Exist}}
\newcommand{\GBWT}{\mathit{GBWT}}
\newcommand{\B}{\mathit{B}}
\begin{document}

\title{Computing Lempel-Ziv Factorization Online}

\institute{Lomonosov Moscow State University, Moscow, Russia\\ tat.starikovskaya@gmail.com}

\author
{
   Tatiana Starikovskaya
}
\date{\empty}
\maketitle

\begin{abstract}
We present an algorithm which computes the Lempel-Ziv factorization of a word $W$ of length $n$ on an alphabet $\Sigma$ of size $\sigma$ online in the following sense: it reads $W$ starting from the left, and, after reading each $r = O(\log_{\sigma}{n})$ characters of $W$, updates the Lempel-Ziv factorization. The algorithm requires $O(n\log\sigma)$ bits of space and $O(n \log^2{n})$ time. The basis of the algorithm is a sparse suffix tree combined with wavelet trees.
\end{abstract}

\section{Introduction}
The Lempel-Ziv factorization (further LZ-factorization for short) of a word $W$ is a decomposition $W = f_1 f_2 \ldots f_z$, where a factor $f_i$, $1 \leq i \leq z$, is either a character that does not occur in $f_1 f_2 \ldots f_{i-1}$ or the longest prefix of $f_i\ldots f_z$ that occurs in $f_1f_2 \ldots f_i$ at least twice~\cite{Crochemore:1986:TR:21537.21539,Ziv77auniversal}.

The most famous application of the LZ-factorization is data compression (e.g. the LZ-factorization is used in gzip, WinZip, and PKZIP). Moreover, it is a basis of several algorithms~\cite{Kolpakov99findingmaximal,Gusfield:2004:LTA:1046081.1046083} and text indexes~\cite{Kreft:2011:SBL:2018243.2018251}.

Let $W$ be a word of length $n$ on an alphabet $\Sigma$ of size $\sigma$. There are many algorithms that compute the LZ-factorization in $O(n \log{n})$ bits of space~\footnote{In this paper $\log$ stands for $\log_2$.}. These algorithms use suffix trees~\cite{Rodeh:1981:LAD:322234.322237}, suffix automata~\cite{Crochemore:1986:TR:21537.21539} or suffix arrays~\cite{Abouelhoda:2004:RST:985384.985389,journals/mics/ChenPS08,Crochemore:2008:CLP:1346353.1346507,conf/iwoca/CrochemoreIIKRW09,Crochemore:2008:SAC:1395764.1395899,Ohlebusch:2011:LFR:2018243.2018249} as a basis.

However, only two algorithms have been known which use $O(n \log \sigma)$ bits of space~\cite{Okanohara:2008:OAF:1431008.1431068,Ohlebusch:2011:LFR:2018243.2018249}. The algorithms exploit similar ideas (both are based on an FM-index and a compressed suffix array). The algorithm~\cite{Ohlebusch:2011:LFR:2018243.2018249} is offline and requires $O(n)$ time.

Running time of the algorithm~\cite{Okanohara:2008:OAF:1431008.1431068} is rather big, $O(n \log^3 n)$, but the algorithm computes the LZ-factorization of a word $W$ online. Consider the factors $f_1, f_2, \ldots, f_i$ of the LZ-factorization of a word $X$. The LZ-factorization of a word $Xa$, where $a$ is a character, contains either $i$ or $i+1$ factors: in the first case the factors are $f_1, f_2, \ldots, f_{i-1}, f'_i$, where the last factor $f'_i = f_i a$; and in the second case the factors are $f_1, f_2, \ldots, f_i, f_{i+1}$, where $f_{i+1} = a$. The algorithm reads $W$ and after reading each new character updates the LZ-factorization, i.e. either increases the length of the last factor by one or adds a new factor.

For many practical applications dealing with large volumes of data it would be natural to allow updating the LZ-factorization only each $r > 1$ new characters of $W$, for some small parameter $r$, in order to reduce the running time. Unfortunately, naive application of this idea to the algorithm~\cite{Okanohara:2008:OAF:1431008.1431068} does not improve its running time.

Here we propose a new linear-space algorithm which achieves a reasonable trade-off between frequency of updates and running time. The algorithm updates the LZ-factorization of $W$ each $r = \frac{\log_{\sigma}n}{4}$ characters of $W$, requiring $O(n \log^2{n})$ time and $O(n \log \sigma)$ bits of space. It is assumed that both $\sigma$ and $n$ are known beforehand and $n \geq \sigma$. The basis of the algorithm is a sparse suffix tree combined with wavelet trees.

Let $X$ be a word of length $|X|$ on $\Sigma$. Throughout the paper, positions in $X$ are numbered from~$1$. The subword of $X$ from position $i$ to position $j$ (inclusively) is denoted by $X[i..j]$. If $j = |X|$, then we write $X[i..]$ instead of $X[i..|X|]$. A word $X[i..]$ is called a suffix of $X$ and a word $X[1..j]$ is called a prefix of $X$.

With each word $Y$ of length $r$ on $\Sigma$ we associate a meta-character $Y'$ formed by concatenating bit representations of characters of $Y$. Note that a bit representation of any character of $Y$ can be obtained from the bit representation of $Y'$ in constant time by two shift operations. Also, $Y'$ can be obtained from $Y$ in $O(r)$ time.

\section{Algorithm}
Let $f_1, f_2, \ldots, f_z$ be the factors of the LZ-factorization of $W$. For the sake of clarity we describe not how to update the LZ-factorization after reading each block of characters but rather how to compute $f_1, f_2, \ldots, f_z$ sequentially. However, it will be easy to see that the presented algorithm can be modified to solve the problem we formulated in the introduction.

Suppose that $f_1, f_2, \ldots, f_{i-1}$ of total length $\ell_i$ have been computed. The algorithm consists of two procedures. The procedure $P_{<r}$ checks if $|f_i|$ is less than $r$ and, if it is, computes $f_i$ (Section~\ref{sec:<r}). The procedure $P_{\geq r}$ computes $f_i$ only if it is already known that $|f_i| \geq r$ (Section~\ref{sec:>r}). To compute $f_i$ the algorithm runs $P_{<r}$ first and then, if necessary, runs $P_{\geq r}$.

\subsection{Data Structures}
The algorithm makes use of several data structures. To explain what these data structures are, we need to give a definition of a trie and a compacted trie first.

\begin{definition}
A trie for a set of words $S$ is a rooted tree edges of which are labelled by characters. For each prefix $P$ of a word $\in S$ there exists exactly one vertex such that $P$ is spelled out by the path from the root of the trie to this vertex, and vice versa, a word spelled out by any path starting at the root must be a prefix of one of the words $\in S$. A compacted trie for $S$ can be constructed from the trie by eliminating all vertices with one son, thus forming edges that are labelled by words rather than single characters.
\end{definition}

The algorithm reads $W$ by blocks of $r$ characters starting from the left. After reading the $t$-th block of $W$, the first data structure is an (uncompacted) trie on suffixes of words $W[rj+1..r(j+2)]$, $j = 0..t-2$. Each explicit vertex $v$ of the trie stores the leftmost starting position of a suffix ending in the subtree rooted at $v$.

Let $W'$ be the meta-word formed by replacing each block of characters of $W$ with the corresponding meta-character. The second data structure is an implicit suffix tree for $W'[1..t]$, i.e. a compacted trie for the set of suffixes of $W'[1..t]$. This tree is also called a sparse suffix tree for $W[1..tr]$~\cite{ChienHonShahVitterDCC08,HonShahThanVitterSPIRE09,ChiuHonShahVitterDCC10}, though the original definition of a sparse suffix tree is slightly different~\cite{KU-96}. 

For each explicit vertex $v$ of the suffix tree we store a compacted trie $CT_v$ on words of length $r$ corresponding to the first meta-characters on the edges outgoing from $v$.

\begin{definition}
   Consider a tree with labels on edges (a suffix tree or a trie). We say that a word $X$ is \emph{represented} by a vertex $v$ (or that $v$ \emph{represents} $X$), if the word spelled out by the path from the root of the tree to $v$ is equal to $X$.
\end{definition}

If the label of an edge $(v,u)$ of the suffix tree begins with a meta-character $Y'$, and $Y$ is the corresponding word of length $r$, then we store a pointer to the edge $(v,u)$ in the leaf of $CT_v$ representing $Y$. Tries in vertices are used for navigation in the suffix tree (but not only for it). Clearly, given a vertex $v$ and a meta-character $Y'$, it takes $O(r\log{\sigma}) = O(\log{n})$ time to find an edge $(v,u)$ such that its label starts with $Y'$.

Also, the algorithm maintains a dynamic data structure which allows, given a vertex $v$ of the suffix tree, to compute the ranks of the leftmost and the rightmost leaves of the subtree rooted at $v$ in $O(\log n)$ time. 

\begin{definition}
	Block borders are positions of $W$ of the form $pr + 1$, $p = 1..\left\lfloor\frac{n}{r}\right\rfloor$.
\end{definition}

Finally, we store a data structure which allows, given an interval $I$, a word $Y \in \Sigma^{[1,r]}$, and a block border $b$, to determine whether a set of block borders corresponding to the starting positions of the suffixes which are represented by the leaves of the suffix tree with ranks in the interval $I$ contains a block border different from $b$ and preceded by an occurrence of $Y$. The procedure \exist$(I, Y, b)$ returns zero if there is no such block border and one of them otherwise.

Details of implementation are not important to understand the algorithm and will be explained later, in Section~\ref{sec:DS}.

Hereafter $\left\lfloor\frac{\ell_i}{r}\right\rfloor$ is denoted by $\ell_i'$. We assume that the algorithm has read the first $\ell_i'+1$ blocks of $W$ before running the procedures $P_{<r}$ and $P_{\geq r}$.

\subsection{Procedure $P_{<r}$}
\label{sec:<r}
Let $W[\ell_i+1..\ell_i+s]$ be the longest prefix of $W[\ell_i+1..\ell_i+r]$ which occurs before the position $\ell_i+1$. Obviously, $|f_i| \geq r$ if $s = r$ (see Fig.~\ref{fig:LM1}), and $|f_i| = s$  if $s < r$.

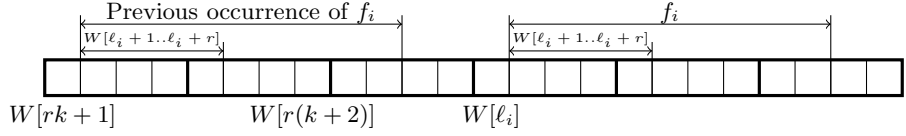
\begin{figure}[h!]
    \begin{center}
	  \begin{tikzpicture}[scale=0.95]
    \filldraw[very thick, fill=white] (0,0) rectangle (12,0.5);
    
    \foreach \x in {0.5,1,...,11.5} {
    	\draw (\x,0) -- (\x,0.5);
    }
    
    \foreach \x in {2,4,6,8,10} {
    	\draw[very thick] (\x,0) -- (\x,0.5);
    }
    
    \node[below] at (0.25,0) {\small{$W[rk+1]$}};
    \node[below] at (3.75,0) {\small{$W[r(k+2)]$}};
    \node[below] at (6.25,0) {\small{$W[\ell_i]$}};

    \draw (0.5,0) -- (0.5,1.125);
    \draw (5,0) -- (5,1.125);
    \draw (2.5,0) -- (2.5,0.75);
    
    \draw[<->] (0.5,1) -- (5,1)
    	node[pos=0.5,yshift=0.15cm]{\small{Previous occurrence of $f_i$}};
    \draw[<->] (0.5,0.625) -- (2.5,0.625)
    	node[pos=0.5,yshift=0.15cm]{\tiny{$W[\ell_i+1..\ell_i+r]$}};

    \draw (6.5,0) -- (6.5,1.125);
    \draw (11,0) -- (11,1.125);
    \draw (8.5,0) -- (8.5,0.75);
    \draw[<->] (6.5,1) -- (11,1)
    	node[pos=0.5,yshift=0.15cm]{\small{$f_i$}};
    \draw[<->] (6.5,0.625) -- (8.5,0.625)
    	node[pos=0.5,yshift=0.15cm]{\tiny{$W[\ell_i+1..\ell_i+r]$}};
\end{tikzpicture}
      \caption{Case $|f_i| \geq r$, $r = 4$. Block borders are in bold.}
      \label{fig:LM1}
    \end{center}
\end{figure}

$P_{<r}$ first computes the longest prefix $W[\ell_i+1..\ell_i+s_0]$ of $W[\ell_i+1..r(\ell_i'+1)]$ which occurs before the position $\ell_i+1$. It traverses the trie starting at the root and following edges labelled by the characters of $W[\ell_i+1..r(\ell_i'+1)]$. The algorithm stops in a vertex $v_0$ either when $v_0$ has no outgoing edge labelled by the next character of $W[\ell_i+1..r(\ell_i'+1)]$ or when the position stored in the next vertex is bigger than $\ell_i$ (which means that its label does not occur at positions $1..\ell_i$).

Clearly, $v_0$ will be labelled by $W[\ell_i+1..\ell_i+s_0]$. If $\ell_i+s_0 < r(\ell_i'+1)$, then $|f_i| = s = s_0$. Otherwise, the algorithm reads $W[r(\ell_i'+1)..r(\ell_i'+2)]$, updates the data structures and proceeds the traverse in a similar manner this time starting at $v_0$ and following edges labelled by the characters of $W[r(\ell_i'+1)+1..\ell_i+r]$. A vertex $v$ the procedure will stop at will be labelled by $W[\ell_i+1..\ell_i+s]$.

From the definition of a trie it follows that the traverse will take $O(|f_i|\log \sigma)$ time.

\subsection{Procedure $P_{\geq r}$}
\label{sec:>r}
$P_{\geq r}$ consists of two steps. The first can be considered as preliminary, and during the second step we compute $|f_i|$. 

\subsubsection{The First Step}
$P_{\geq r}$ starts with reading $W$. After reading the $s$-th block, it updates the data structures and checks whether $W'[\ell_i'+1..s]$ is represented by a leaf of the suffix tree of $W'[1..s]$. If it is, $P_{\geq r}$ proceeds to the second step. From the definition of a suffix tree it follows that after the first step of $P_{\geq r}$ all suffixes starting at positions less than $\ell_i'$ will be represented by leaves.

\begin{lemma}
\label{lm:prelim}
During the first step at most $|f_i| + r$ characters of $W$ will be read.
\end{lemma}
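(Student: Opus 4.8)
The first step reads $W$ block by block, and after reading block $s$ it stops precisely when the meta-suffix $W'[\ell_i'+1..s]$ becomes a leaf of the suffix tree of $W'[1..s]$, i.e.\ when it occurs only once in $W'[1..s]$. In character terms, writing $M_s := W[r\ell_i'+1..rs]$ for the word spelled by this meta-suffix, the condition is that $M_s$ has \emph{no} block-aligned occurrence beginning at a block $j\le\ell_i'$ (an occurrence at such a block starts at position $p=r(j-1)+1\le r\ell_i'-r+1$). Let $s^\ast$ be the block at which the first step stops. Since after reading block $s^\ast$ the rightmost position of $W$ read is $rs^\ast$, and the first step only examines $W$ beyond the already-factored prefix $W[1..\ell_i]$, it suffices to prove $rs^\ast\le \ell_i+|f_i|+r$. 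The plan is therefore to upper-bound $s^\ast$ by showing that the leaf condition is forced as soon as $rs$ is large enough to cover one character past the end of $f_i$.

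The key step is the claim: \emph{if $rs\ge \ell_i+|f_i|+1$, then $M_s$ is unique, hence the first step has already stopped by block $s$.} I would argue by contradiction. Set $\delta := \ell_i-r\ell_i'$; since $\ell_i'=\lfloor \ell_i/r\rfloor$ we have $0\le\delta\le r-1$, so the block border $r\ell_i'+1$ at which $M_s$ begins precedes the factor start $\ell_i+1$ by exactly $\delta<r$ positions. Suppose $M_s$ also occurs block-aligned at some block $j\le\ell_i'$, i.e.\ $W[p..p+r(s-\ell_i')-1]=M_s$ with $p=r(j-1)+1\le r\ell_i'-r+1$. Because $rs\ge \ell_i+|f_i|+1$, the word $M_s$ contains the extended factor $W[\ell_i+1..\ell_i+|f_i|+1]$ (it sits at offset $\delta$ inside $M_s$ and ends at offset $\delta+|f_i|\le rs-r\ell_i'-1$, which is within $M_s$). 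Transporting this occurrence through the copy at $p$ gives $W[p+\delta..p+\delta+|f_i|]=W[\ell_i+1..\ell_i+|f_i|+1]$. Its starting position satisfies $p+\delta\le (r\ell_i'-r+1)+(r-1)=r\ell_i'\le\ell_i$, so $f_i$ extended by one character occurs strictly before position $\ell_i+1$. This contradicts the maximality of $f_i$, which guarantees that $W[\ell_i+1..\ell_i+|f_i|+1]$ has no occurrence preceding $\ell_i+1$. Hence no such $j$ exists and $M_s$ is unique.

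The claim gives $s^\ast\le \lceil(\ell_i+|f_i|+1)/r\rceil$, whence $rs^\ast\le r\lceil(\ell_i+|f_i|+1)/r\rceil\le \ell_i+|f_i|+r$; subtracting $\ell_i$ shows that at most $|f_i|+r$ characters are read during the first step, as required. The only routine edge case is when $f_i$ extends to the end of $W$, so that $W[\ell_i+|f_i|+1]$ does not exist; then there is simply nothing further to read and the bound holds vacuously. I expect the main obstacle to be precisely the alignment bookkeeping in the key step: one must check that the block border anchoring $M_s$ lies at most $r-1$ positions before $\ell_i+1$, that $M_s$ genuinely reaches one position past $f_i$ once $rs\ge\ell_i+|f_i|+1$, and that the shifted copy $p+\delta$ lands strictly before $\ell_i+1$ rather than merely at or after it. Getting these three inequalities to line up simultaneously is where the argument must be handled with care, but each reduces to the single fact $0\le\delta\le r-1$.
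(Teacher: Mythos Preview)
Your proof is correct and uses essentially the same idea as the paper: an earlier block-aligned occurrence of the meta-suffix $W'[\ell_i'+1..s]$ (equivalently $W'[\ell_i'+1..s-1]$) yields, after the offset shift by $\delta=\ell_i-r\ell_i'$, an earlier occurrence of a prefix of $W[\ell_i+1..]$, which bounds $|f_i|$ from below and hence $rs^\ast$ from above. The paper argues this directly at block $s-1$ to get $|f_i|\ge (s-1)r-\ell_i$, whereas you phrase it as the contrapositive (if $rs\ge \ell_i+|f_i|+1$ then the meta-suffix must already be a leaf) and spell out the alignment inequalities more carefully; the content is the same.
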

\begin{proof}
Since $s$ is the minimal position such that $W'[\ell_i'+1..s]$ is represented by a leaf, $W'[\ell_i'+1..s-1]$ is represented by an inner vertex in the suffix tree of $W'[1..s-1]$ and, consequently, occurs before the position $\ell_i'+1$ in $W'$. Therefore, $W[\ell_i+1..(s-1)r]$ occurs before the position $\ell_i+1$ (see Fig.~\ref{fig:LM2}) and $|f_i| \geq |W[\ell_i+1..(s-1)r]|$. The statement of the lemma easily follows.
\end{proof}

\begin{figure}[h!]
    \begin{center}
      \begin{tikzpicture}[scale=1.1]
    \filldraw[very thick, fill=white] (0,0) rectangle (12,0.5);
    
    \foreach \x in {0.5,1,...,11.5} {
    	\draw (\x,0) -- (\x,0.5);
    }
    
    \foreach \x in {2,4,6,8,10} {
    	\draw[very thick] (\x,0) -- (\x,0.5);
    }
    
    \node[below] at (6.25,0) {\small{$W[\ell_i]$}};
	\node[below] at (11.75,0) {\small{$W[rs]$}};

    \draw (0.5,0) -- (0.5,0.75);
    \draw (4,0) -- (4,0.75);
    \draw[<->] (0.5,0.625) -- (4,0.625)
    	node[pos=0.5,yshift=0.2cm]{\small{$W[\ell_i+1..r(s-1)]$}};
    
    \draw (6.5,0) -- (6.5,0.75);
    \draw (10,0) -- (10,0.75);
    \draw[<->] (6.5,0.625) -- (10,0.625)
    	node[pos=0.5,yshift=0.2cm]{\small{$W[\ell_i+1..r(s-1)]$}};

   \filldraw[very thick, fill=white] (0,-1.5) rectangle (12,-1);
             
   \foreach \x in {2,4,6,8,10} {
   	\draw[very thick] (\x,-1.5) -- (\x,-1);
   }
       
   \node[below] at (7,-1.5) {\small{$W'[\ell_i'+1]$}};
   \node[below] at (11,-1.5) {\small{$W'[s]$}};
   
   \draw (0,-1.5) -- (0,-0.75);
   \draw (4,-1.5) -- (4,-0.75);
   \draw[<->] (0,-0.875) -- (4,-0.875)
   	node[pos=0.5,yshift=0.2cm]{\small{Prev. occ. of $W'[\ell_i'+1..s-1]$}};
       
   \draw (6,-1.5) -- (6,-0.75);
   \draw (10,-1.5) -- (10,-0.75);
   \draw[<->] (6,-0.875) -- (10,-0.875)
   	node[pos=0.5,yshift=0.2cm]{\small{$W'[\ell_i'+1..s-1]$}}; 
\end{tikzpicture}
      \caption{Relation between $W'[\ell_i'+1..s-1]$ and $W[\ell_i+1..r(s-1)]$.}
      \label{fig:LM2}
    \end{center}
\end{figure}
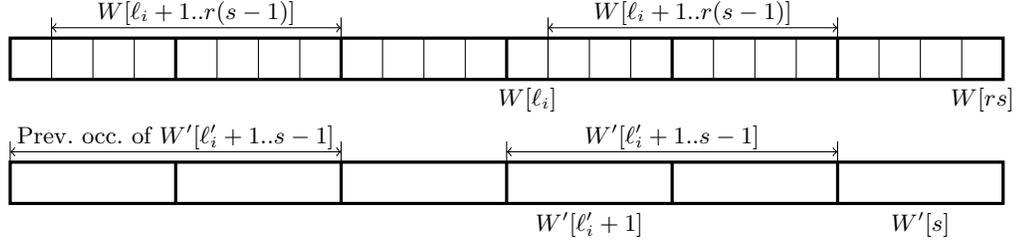

We initialize $M$ with $|W[\ell_i+1..(s-1)r]|$. From the proof of the lemma it follows that $|f_i| \geq M$. During the computation process we will increase $M$ until, finally, it will become equal to $|f_i|$.

Furthermore, the lemma guarantees that after the first step the difference between the position of the last read character of $W$ and $\ell_i + M$ is less than $r$. This invariant will be maintained throughout the second step of the procedure as well in the following way: we will read a new block of characters and update the data structures only when $\ell_i + M$ is equal to the position of the last read character of $W$.

\subsubsection{The Second Step}
Consider the first block border which intersects a previous occurrence of $f_i$ (see Fig.~\ref{fig:LM3}). It divides the occurrence into two parts: the first short part equal to $W[\ell_i+1..\ell_i+m-1]$ and the second part equal to a prefix of $W[\ell_i+m..]$, $m \in [1,r]$.

\begin{figure}[h!]
    \begin{center}
      \begin{tikzpicture}[scale=0.95]
    \filldraw[very thick, fill=white] (0,0) rectangle (12,0.5);
    \filldraw[fill=gray!20] (0.5,0) rectangle (2,0.5);
    \draw[very thick] (0.5,0) -- (2,0);
    \draw[very thick] (0.5,0.5) -- (2,0.5);
    
    \foreach \x in {0.5,1,...,11.5} {
    	\draw (\x,0) -- (\x,0.5);
    }
    
    \foreach \x in {2,4,6,8,10} {
    	\draw[very thick] (\x,0) -- (\x,0.5);
    }
    
    \node[below] at (0.25,0) {\small{$W[rk+1]$}};
    \node[below] at (3.75,0) {\small{$W[r(k+2)]$}};
    \node[below] at (6.25,0) {\small{$W[\ell_i]$}};

    \draw (0.5,0) -- (0.5,0.75);
    \draw (5,0) -- (5,0.75);
    
    \draw[<->] (0.5,0.625) -- (5,0.625)
    	node[pos=0.5,yshift=0.15cm]{\small{Previous occurrence of $f_i$}};

    \draw (6.5,0) -- (6.5,0.75);
    \draw (11,0) -- (11,0.75);
    \draw[<->] (6.5,0.625) -- (11,0.625)
    	node[pos=0.5,yshift=0.15cm]{\small{$f_i$}};
\end{tikzpicture}
      \caption{A previous occurrence of $f_i$. The part equal to $W[\ell_i+1..\ell_i+m-1]$ ($m = 4$) is highlighted in grey.}
      \label{fig:LM3}
    \end{center}
\end{figure}
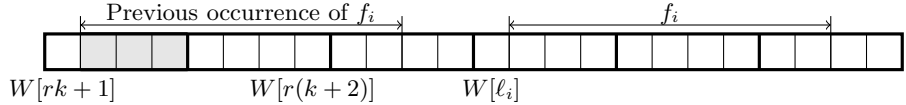

Let $f_i^m$ be the longest prefix of $W[\ell_i+m..]$ with at least one occurrence at a block border which is less than $\ell_i+1$ and preceded by an occurrence of $W[\ell_i+1..\ell_i+m-1]$. Obviously, $|f_i| = \max_{m \in [1,r]}(|f_i^m|+m-1)$. 

For each $m = 1..r$ the procedure $P_{\geq r}$ either computes $|f_i^m|$ and updates $M$ or proves that $|f_i^m|+m-1 \leq M$ and starts computation of $|f_i^{m+1}|$.

If $(\ell_i'+1)r+1-m \le \ell_i$, then the second step of $P_{\geq r}$ starts with computing the length $q$ of the longest common prefix of  $W[\ell_i+1..]$ and $W[(\ell_i'+1)r+1-m..]$. In order to compute $q$ the procedure compares the two strings character by character. If $q > M$, the procedure puts $M$ equal to $q$. Then the procedure starts to work with the suffix tree.

The procedure traverses the suffix tree starting at the root and following the edges so that characters of the words corresponding to meta-characters of labels coincide with characters of $W[\ell_i+m..]$. For navigation the procedure uses compact tries stored in the vertices of the suffix tree.

Suppose that after reading a word $W[\ell_i+m..p]$ of length at least $M-m+1$ the procedure is on the edge $(v,u)$ of the suffix tree. Two cases are possible depending on whether $u$ is an inner vertex of the suffix tree or a leaf.

Let $u$ be an inner vertex. Obviously, $|f_i^m| \geq |W[\ell_i+m..p]|$ iff a set of the block borders corresponding to the leaves of the subtree rooted at $u$ contains a block border less than $\ell_i+1$ preceded by an occurrence of $W[\ell_i+1..\ell_i+m-1]$.

\begin{definition}
  \emph{String depth} of a vertex $u$ of the suffix tree is the length of its label.
\end{definition}

\begin{lemma}
\label{lm:B_v_prop}
 Let $u$ be an explicit inner vertex of the suffix tree of $W'[1..s]$ with string depth at least $\left\lfloor\frac{M}{r}\right\rfloor$. Then a block border corresponding to a leaf in the subtree rooted at $v$ can not be bigger than $(\ell_i'+1)r + 1$.
\end{lemma}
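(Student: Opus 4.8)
The plan is to reason directly about a leaf of the subtree rooted at $u$. Such a leaf represents a suffix $W'[k..s]$ of $W'[1..s]$, and the block border associated with it is the $W$-position at which this suffix starts, namely $(k-1)r+1$. Thus the claim is equivalent to proving $k \le \ell_i'+2$, and the whole argument will consist of squeezing $k$ between an upper bound on $s$ and a lower bound coming from the string depth of $u$.

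First I would bound $s$ using the invariant maintained throughout the second step of $P_{\geq r}$: the difference between the position $rs$ of the last read character of $W$ and $\ell_i + M$ is less than $r$, so $rs < \ell_i + M + r$. Since $\ell_i' = \lfloor \ell_i/r\rfloor$ we also have $\ell_i < (\ell_i'+1)r$, hence $rs < (\ell_i'+2)r + M$, i.e.\ $s < \ell_i' + 2 + M/r$. Because $M/r < \lfloor M/r\rfloor + 1$ and $s$ is an integer, this gives $s \le \ell_i' + 2 + \lfloor M/r\rfloor$.

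Next I would use that $u$ is an \emph{inner} vertex. A leaf of the subtree rooted at $u$ lies strictly below $u$, so the label of $u$ is a \emph{proper} prefix of the suffix $W'[k..s]$ that this leaf represents; hence the length $s-k+1$ of $W'[k..s]$ strictly exceeds the string depth of $u$, which by hypothesis is at least $\lfloor M/r\rfloor$. Therefore $s-k+1 \ge \lfloor M/r\rfloor + 1$, i.e.\ $k \le s - \lfloor M/r\rfloor$. Combined with the bound on $s$ this yields $k \le \ell_i'+2$, and so the block border $(k-1)r+1 \le (\ell_i'+1)r+1$, which is what we wanted.

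I expect the only real points of care to be (i) the translation between the meta-word $W'$ and $W$ — each meta-character covers exactly $r$ consecutive characters, so the suffix $W'[k..s]$ starts at $W$-position $(k-1)r+1$ — and (ii) invoking the invariant from Lemma~\ref{lm:prelim} in the precise form $rs < \ell_i + M + r$, which is exactly what makes the upper bound on $s$ and the lower bound obtained from the string depth of $u$ meet at $k = \ell_i'+2$. Everything else is elementary arithmetic with floor functions, and the hypothesis that $u$ is inner enters only through the elementary fact that in a compacted trie a leaf strictly below an internal node spells a strictly longer word than the node.
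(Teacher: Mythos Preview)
Your proof is correct and follows essentially the same line as the paper's: bound from below the length of any suffix represented by a leaf under $u$ via the string-depth hypothesis, then bound its starting position from above using the relation between $s$, $M$, and $\ell_i'$ coming from the reading invariant. Your arithmetic is in fact more careful than the paper's one-line argument, which asserts the equality $\lfloor M/r\rfloor + 1 = s-\ell_i'$ (this can be off by one), but both routes land on the stated bound $(\ell_i'+1)r+1$.
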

\begin{proof}
  Indeed, a subtree rooted at $u$ can only contain leaves representing suffixes of length at least $\left\lfloor\frac{M}{r}\right\rfloor + 1 = s-\ell_i'$, and all such suffixes start at positions $\leq \ell_i'+1$. The statement immediately follows.
\end{proof}

Since $|W[\ell_i+m..p]| \geq M-m+1$, the string depth of $u$ is at least $\lceil\frac{M-m+1}{r}\rceil \geq \lfloor \frac{M}{r} \rfloor$. It follows from the lemma that a set of the block borders corresponding to the leaves of the subtree rooted at $u$ might contain only one block border situated to the left of $\ell_i+1$, namely, $(\ell_i'+1)r + 1$. 

Let $left(u)$ and $right(u)$ be the ranks of the leftmost and the rightmost leaves of the subtree rooted at $u$. The ranks $left(u)$ and $right(u)$ can be computed in $O(\log n)$ time (see Section~\ref{sec:DS}). Then the set of the block borders corresponding to the leaves of the subtree rooted at $u$ contains a block border less than $\ell_i+1$ preceded by an occurrence of $W[\ell_i+1..\ell_i+m-1]$ iff the set of block borders corresponding to leaves with ranks belonging to the interval $[left(u), right(u)]$ contains at least one block border different from $(\ell_i'+1)r + 1$ and preceded by an occurrence of $W[\ell_i+1..\ell_i+m-1]$. To define is this condition holds we call the procedure \exist{}. If such a block border exists, the procedure updates $M$ and proceeds. Otherwise, the procedure starts computation of $|f_i^{m+1}|$.

If $u$ is a leaf then instead calling the procedure \exist{} we first check if this leaf corresponds to a block border less than $\ell_i+1$ and then check if the border is preceded by an occurrence of $W[\ell_i+1..\ell_i+m-1]$ using a character-by-character comparison. 

Suppose now that after reading $W[\ell_i+m..p]$ of length at least $M - m + 1$ the procedure stops on an edge $(v', u')$ of the compact trie $CT_v$, stored in a vertex $v$ of the suffix tree (which means that we are looking for an edge outgoing from $v$ which has an appropriate label). Let $u_1$ and $u_2$ be sons of $v$, which correspond to the leftmost and the rightmost leaves of the subtree of $CT_v$ rooted at $u'$. Obviously, $u_1$ and $u_2$ can be found in $O(r)$ time. All block borders corresponding to leaves with ranks $left(u_1), left(u_1) + 1, \ldots, right(u_2)$ are the starting positions of occurrences of $W[\ell_i+m..p]$. Moreover, if one of these block borders is bigger than $\ell_i+1$ then it is equal to $(\ell_i'+1)r + 1$ (Lemma~\ref{lm:B_v_prop}). To define if there is a block border corresponding to a leaf with the rank in the interval $[left(u_1), right(u_2)]$ preceded by an occurrence of $W[\ell_i..\ell_i+m-1]$ and different from $(\ell_i'+1)r + 1$ the algorithm calls the procedure \exist{}.

Correctness of the procedure $P_{\geq r}$ follows from its description. The following lemma estimates the time spent during $P_{\geq r}$, not including the time for updates of the data structures.

\begin{lemma}
\label{lm:second_step}
To compute $f_i$ the procedure $P_{\geq r}$ needs $O(|f_i|\log^2{n}+r\log^2{n})$ time.
\end{lemma}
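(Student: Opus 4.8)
The plan is to sum, operation by operation, the cost of the first step of $P_{\geq r}$ and the cost of each of the $r$ iterations of the second step, leaving out (as the statement allows) the updates of the data structures. Throughout we use $r=\frac{\log_\sigma n}{4}\le\log n$, so $r=O(\log n)$ and $r^2=O(\log^2 n)$. The two easy contributions are the first step and the initial longest-common-prefix computations. For the first step, Lemma~\ref{lm:prelim} bounds the number of processed characters by $|f_i|+r$, hence the number of processed blocks by $O(|f_i|/r+1)$; apart from the excluded updates each block costs $O(\log n)$, for a total of $O((|f_i|+r)\log n)$. For iteration $m$, the character-by-character computation of the longest common prefix $q$ of $W[\ell_i+1..]$ and $W[(\ell_i'+1)r+1-m..]$ costs $O(|f_i|+1)$: indeed $(\ell_i'+1)r+1-m\le\ell_i$, so $W[\ell_i+1..\ell_i+q]$ has an occurrence starting no later than position $\ell_i$, whence $q\le|f_i|$ since $f_i$ is the longest prefix of $W[\ell_i+1..]$ with an earlier occurrence; over all $m$ this sums to $O(r|f_i|+r)=O(|f_i|\log n)$.

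The heart of the proof is bounding the descent of the suffix tree performed in iteration $m$. I claim it matches only $O(|f_i^m|+r)$ characters of $W$. Before its first call to \exist{} the procedure reads exactly $M-m+1\le|f_i|$ characters, and it continues past a matched point only after \exist{} has certified that the matched prefix is a prefix of $f_i^m$. Here one uses that, when the locus of the matched string lies in the interior of a suffix-tree edge $(v,u)$, the block-border occurrences of that string are exactly the block borders attached to the leaves below $u$; consequently a single call to \exist{} on the subtree of $u$ (or on the range $[left(u_1),right(u_2)]$ in the case where navigation stops inside some $CT_v$) decides whether the matched prefix can be extended past that point, so the procedure never scans more than one meta-character, i.e. $r$ real characters, beyond position $\ell_i+|f_i^m|$, no matter how long the current edge label is. Since $|f_i^m|+m-1\le|f_i|$ for every $m$, summing yields $\sum_{m=1}^r(\text{characters matched in iteration }m)=O(r|f_i|+r^2)=O(|f_i|\log n+\log^2 n)$.

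Next count the explicit suffix-tree vertices visited. An explicit vertex reached in iteration $m$ has meta-depth at least the number of explicit vertices on the path to it from the root, and real string depth $O(|f_i^m|+r)$ by the previous paragraph, so iteration $m$ visits $O(|f_i^m|/r+1)$ explicit vertices; using $\sum_m|f_i^m|\le r|f_i|$ again, the total over all iterations is $O(|f_i|+r)$. At each visit the procedure navigates a trie $CT_v$ in $O(r\log\sigma)=O(\log n)$ time, computes $left(u)$ and $right(u)$ in $O(\log n)$ time (both as provided by the data structures), and makes one call to \exist{}, which runs in $O(\log^2 n)$ time (established in Section~\ref{sec:DS}); the remaining per-iteration work — $O(r)$ to locate $u_1,u_2$, and an $O(m)=O(r)$ character comparison at leaves — contributes only $O(r^2)$ overall and is dominated. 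Hence the descent together with its bookkeeping costs $O((|f_i|+r)\log^2 n)$. Adding the first-step bound and the longest-common-prefix bound, the total is $O((|f_i|+r)\log^2 n)=O(|f_i|\log^2 n+r\log^2 n)$, as claimed.

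The part I expect to require the most care is the character count of the second paragraph: arguing that a suffix-tree edge, whose label may be arbitrarily long, is never scanned to its end. The key is to combine the rule of reading exactly $M-m+1$ characters before the first \exist{} query with the observation that a single \exist{} query against the subtree below the far endpoint of the current edge already decides extendability, so that at most $O(r)$ characters are ever matched beyond $\ell_i+|f_i^m|$; once this is in place, everything else is a routine summation.
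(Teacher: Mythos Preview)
Your overall strategy—bounding the cost iteration by iteration and summing over $m=1,\dots,r$—is sound and leads to the right total, but there is a gap in the per-iteration descent bound, and the paper's own argument sidesteps it in a much simpler way.

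The gap is in your claim that iteration $m$ matches only $O(|f_i^m|+r)$ characters. You yourself note that before the first \exist{} call the procedure reads $M-m+1$ characters, and you only bound this by $|f_i|$. If the first \exist{} call fails (which happens precisely when $M-m+1>|f_i^m|$), the iteration terminates after $M-m+1$ characters, and this can exceed $|f_i^m|+r$ by an arbitrary amount—$M$ is carried over from previous iterations and may be close to $|f_i|$ even when $|f_i^m|$ is tiny. So the per-iteration bound $O(|f_i^m|+r)$ is false. The fix is easy: bound the descent in iteration $m$ by $O(|f_i|)$ instead; the sum over $m$ is still $O(r|f_i|)=O(|f_i|\log n)$, and your subsequent vertex count becomes $O(|f_i|/r+1)$ per iteration and $O(|f_i|+r)$ in total, which is all you need.

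The paper avoids this whole discussion by a one-line amortized argument: every call to \exist{} either strictly increases the global variable $M$ (on success) or advances $m$ (on failure), and since $M\le|f_i|$ and $m\le r$, the total number of \exist{} calls over all iterations is at most $|f_i|+r$. This immediately gives $O((|f_i|+r)\log^2 n)$ for the dominant term without any analysis of how far each iteration descends or how many explicit vertices it touches. Your per-iteration vertex-counting approach reaches the same bound but requires the edge-length argument and the $CT_v$ discussion you flag as delicate; the paper's amortization needs none of that.
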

\begin{proof}
During the first step $P_{\geq r}$ reads $O(|f_i|+r)$ characters of $W$ (Lemma~\ref{lm:prelim}).

To compute the longest common prefix of $W[\ell_i+1..]$ and $W[(\ell_i'+1)r+1-m..]$ we need $O(|f_i^m|)$ time. To follow $f_i^m$ down in the suffix tree we need $O(|f_i^m|\log \sigma)$ time. Since after each execution of the procedure \exist{} we either increase $M$ or proceed to the computation of $f_i^{m+1}$, it is executed at most $r + |f_i|$ times. The procedure \exist{} takes $O(\log^2{n})$ time (see Section~\ref{sec:DS}). Therefore, the total time spent during the second step of $P_{\geq r}$ is $O((r + |f_i|)\log^2{n} + r|f_i|\log\sigma) = O(|f_i|\log^{2}{n}+r\log^2{n})$.
\end{proof}

\section{Data Structures}
\label{sec:DS}
As we have already said, our algorithm maintains two data structures. In this section we give the details and describe update procedures.

\subsection{Trie}
After reading $W[1..tr]$ the trie contains suffixes of words $W[rj+1..r(j+2)]$, $j = 0..t-2$. To update the trie after reading the $(t+1)$-th block of characters we first check if $W[r(t-1)+1..r(t+1)]$ is represented in the trie. To do that we traverse the trie starting at the root and following edges labelled by the characters of $W[r(t-1)+1..r(t+1)]$. If we read out the whole word, then $W[r(t-1)+1..r(t+1)]$, and, consequently, all its suffixes are represented in the trie. If not, we add all suffixes of $W[r(t-1)+1..r(t+1)]$, including the word itself, to the trie. 

\begin{lemma}
\label{lm:CT}
The trie occupies $o(n)$ bits of space and its maintenance takes $O(n\log\sigma)$ time.
\end{lemma}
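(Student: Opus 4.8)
The plan is to prove that, despite being built from up to $\Theta(n/r)$ words, the trie is in fact tiny---it has only $O(\sqrt{n})$ vertices---and then to read off both the space bound and the time bound from that.

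First I would bound the number of vertices. Every word in the set on which the trie is built is a suffix of some $W[rj+1..r(j+2)]$, hence has length at most $2r = \frac{\log_{\sigma}n}{2}$. By the definition of a trie each explicit vertex is in bijection with a distinct prefix of some word in the set, and every such prefix is itself a string over $\Sigma$ of length at most $2r$. The number of strings over $\Sigma$ of length at most $2r$ is $\sum_{k=0}^{2r}\sigma^{k} = O(\sigma^{2r})$, and $\sigma^{2r} = \sigma^{(\log_{\sigma}n)/2} = \sqrt{n}$, so the trie has $O(\sqrt{n})$ vertices and, being a tree, $O(\sqrt{n})$ edges.

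The space bound is then immediate. Each vertex stores one starting position of $W$ ($O(\log n)$ bits) together with a constant number of pointers inside the trie (a pointer into a trie of $O(\sqrt{n})$ vertices again takes $O(\log n)$ bits), and each edge carries a single-character label ($O(\log\sigma)$ bits); to navigate we keep the children of every vertex in a balanced search tree, which lets us follow the edge labelled by a given character in $O(\log\sigma)$ time and costs another $O(\log n)$ bits per edge. Summing over $O(\sqrt{n})$ vertices and edges gives $O(\sqrt{n}\log n) = o(n)$ bits. For the running time I would split the work done while reading block $t+1$ into two parts. The check whether $W[r(t-1)+1..r(t+1)]$ is already represented is a traversal along $2r$ characters, costing $O(r\log\sigma)$; over all $O(n/r)$ blocks this contributes $O(n\log\sigma)$. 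New suffixes are inserted only when $W[r(t-1)+1..r(t+1)]$ is not yet represented, and since this word has length $2r$ there are at most $\sigma^{2r} = O(\sqrt{n})$ distinct values it can take, so the insertion branch is entered at most $O(\sqrt{n})$ times. Each time we insert at most $2r$ suffixes of length at most $2r$; one insertion (walking down, possibly creating new vertices, and updating the stored leftmost positions along the path, which only ever decrease) costs $O(r\log\sigma)$, so one insertion event costs $O(r^{2}\log\sigma) = O(\log^{2}n)$, and the total insertion cost is $O(\sqrt{n}\log^{2}n) = o(n)$. Adding the two parts gives $O(n\log\sigma)$.

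The one place where care is needed is precisely this counting. A face-value estimate---$\Theta(n/r)$ words, each contributing $\Theta(r)$ suffixes of length up to $2r$---would suggest $\Theta(nr)$ edges, hence an $\Omega(n\log n)$-bit structure and a super-$n\log\sigma$ maintenance cost, which is too large. The observation that rescues the bound is that the universe of strings of length at most $2r$ has size only $\sigma^{2r} = \sqrt{n}$; this single fact simultaneously caps the trie size, the per-vertex and per-edge pointer overhead, and the number of times suffixes are ever inserted. I expect the rest of the argument to be routine once this observation is in place.
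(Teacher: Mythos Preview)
Your proof is correct and follows the same approach as the paper: bound the trie size via $\sigma^{2r}=\sqrt{n}$, then split the maintenance cost into the per-block membership checks ($O(n\log\sigma)$ total) and at most $\sqrt{n}$ insertion events of cost $O(r^{2}\log\sigma)$ each. Your vertex count $O(\sqrt{n})$ is in fact slightly sharper than the paper's $O(\sqrt{n}\,r^{2})$---you count distinct prefixes of length $\le 2r$ directly, whereas the paper multiplies (number of length-$2r$ words)$\times$(suffixes per word)$\times$(length per suffix)---but either bound yields $o(n)$ bits.
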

\begin{proof}
Due to our choice of $r$, there are at most $\sigma^{2r} = \sigma^{\frac{\log_{\sigma}{n}}{2}} = n^{\frac{1}{2}}$ different words of length $2r$ on $\Sigma$. Therefore, the trie has at most $n^{\frac{1}{2}} r^2$ vertices and occupies $o(n)$ bits of space.

To check if the words $W[rj+1..r(j+2)]$, $j = 0..\frac{n}{r}-2$, are represented in the trie one needs $O(n\log\sigma)$ time in total. During the algorithm we add suffixes of at most $n^{\frac{1}{2}} < \frac{n}{r^2}$ words. All suffixes of a word of length $2r$ can be added to the trie in $O(r^2 \log\sigma)$ time, so we get the announced time bound.
\end{proof}

Finally, suppose that we create a new vertex $v$ in the process of adding a suffix $W[p..r(k+2)]$ of the word $W[rk+1..r(k+2)]$ to the trie. Then we just remember the position $p$ as the leftmost starting position of a suffix ending in the subtree rooted at $v$. This completes the description of the update procedure of the trie.

\subsection{Suffix Tree}
The suffix tree is updated by Ukkonen's algorithm~\cite{ukkonen:on-line}. When we create a new edge outgoing from a vertex $v$ with the first character of the label equal to $W'[k]$, we add $W[(k-1)r+1..kr]$ to $CT_v$.

Below we describe the procedure \exist{} and how to compute the ranks of the leftmost and the rightmost leaves in a subtree rooted at a vertex $v$.

\subsubsection{Ranks of the leftmost and the rightmost leaves}
The data structure we will use to compute the ranks of the leftmost and the rightmost leaves of a subtree is similar to the one from~\cite{KNS12}.

We maintain a dynamic doubly-linked list $EL$ corresponding to the Euler tour of the current suffix tree. Each internal vertex of the suffix tree is stored in two copies in $EL$, corresponding respectively to the first and last visits of the vertex during the Euler tour. Leaves of the suffix tree are kept in one copy. Observe that the leaves of the suffix tree appear in $EL$ in the ``left-to-right'' order, although not consecutively.

We also maintain a balanced binary tree, denoted $BT$, whose leaves are elements of $EL$. Note that the number of vertices of $BT$ is bounded by $2\frac{n}{r}$ and the height of~$BT$ is $O(\log n)$. We call leaves of $BT$ corresponding to leaves of the suffix tree {\em suffix leaves}. For each suffix leaf we store the corresponding block border (we will use this information in the procedure \exist{}), and for each internal vertex $u$ of~$BT$ we store the number of suffix leaves in the subtree of~$BT$ rooted at~$u$.

The rank of the leftmost leaf in the subtree rooted at $v$ is the number of the suffix leaves in $EL$ preceding the first copy of $v$ in $EL$ plus one. This number can be computed in $O(\log{n})$ time by following the path from the leaf of $BT$ corresponding to this copy to the root of $BT$ and summing up the number of the suffix leaves in the subtrees rooted at the left sons of the vertices on the path. The rank of the rightmost leaf can be computed in a similar way.

Now we should explain how to update $EL$ and $BT$. When a new vertex $v$ is added to a suffix tree, the following updates should be done (in order):

\begin{itemize}
\item[(i)] insert $v$ at the right place of the list $EL$ (in two copies if $v$ is an internal vertex),
\item[(ii)] rebalance the tree $BT$ if needed,
\item[(iii)] if $v$ is a leaf of the suffix tree (i.e. a suffix leaf of $BT$), update information about the number of suffix leaves in $BT$.
\end{itemize}

To see how update (i) works, we have to recall how suffix tree is updated when a new document is inserted. Two possible updates are creation of a new internal vertex $v$ by splitting an edge into two (edge subdivision) and creating a new leaf $u$ as a child of an existing vertex. In the first case, we insert the first copy of $v$ right after the first copy of its parent, and the second copy right before the second copy of its parent. In the second case, the parent of $u$ has already at least one child, and we insert $u$ either right after the second (or the only) copy of its left sibling, or right before the first (or the only) copy of its right sibling. 

Rebalancing the tree $BT$ (update (ii)) is done using standard methods. Observe that during the rebalancing we may have to adjust the information about the number of the suffix leaves for internal vertices, but this is easy to do as only a constant number of local modifications is done at each level.

Update (iii) is triggered when a new leaf $u$ is created in the suffix tree and added to $EL$. We then have to follow the path in $BT$ from the new leaf $u$ to the root and update the information about the number of suffix leaves for all vertices on this path. These updates are
straightforward. All these steps take $O(\log n)$ time. 

\subsubsection{Procedure \exist$(I, Y, b)$}
Let $p_i$ be the starting position of the suffix represented by the $i$-th leaf in the left-to-right order on the leaves of the suffix tree. Consider a virtual sequence $\GBWT$, where $\GBWT[i]$ is equal to the reverse of the bit representation of $W'[p_i-1]$. Elements of $\GBWT$ belong to a segment $[0,\sigma^{r}] = [0,n^{\frac{1}{4}}]$.

Consider a dynamic wavelet tree for $\GBWT$. The wavelet tree for a sequence $\GBWT$, elements of which belong to a segment $[min, max] \subset [0,n^{\frac{1}{4}}]$ can be defined recursively. If $min = max$ then the wavelet tree consists of one vertex corresponding to $min$. Otherwise the tree has a root corresponding to the segment $[min,max]$. A binary vector $V_{root}$ is defined as follows: if $\GBWT[i] \leq \lfloor\frac{min+max}{2}\rfloor$, then $V_{root}[i] = 0$, otherwise $V_{root}[i] = 1$. We store a data structure~\cite{Makinen:2008:DES:1367064.1367072} which allows ro read any bit $V_{root}[i]$, compute the number of zeros or ones in a prefix $V_{root}[1..i]$ ($rank_0(i, V_{root})$ or $rank_1(i,V_{root})$ ), or compute the position of $i$-th zero or $i$-th one ($select_0(i, V_{root})$ or $select_1(i,V_{root})$), as well as to add a new bit between $V_{root}[i]$ and $V_{root}[i+1]$ in $O(\log n)$ time. The data structure occupies the number of bits proportional to the vector's length. Let $\GBWT_{left}$ be a subsequence of $\GBWT$ formed by the elements $\GBWT[i]$, $\GBWT[i] \leq \lfloor\frac{min+max}{2}\rfloor$, and $\GBWT_{right}$ be the complementary subsequence. Then a subtree rooted at the left son of the root is the wavelet tree for $\GBWT_{left}$, elements of which belong to a segment $[min, \lfloor\frac{min+max}{2}\rfloor]$, and a subtree rooted at the right son of the root is the wavelet tree for $\GBWT_{right}$, elements of which belong to a segment $[\lfloor\frac{min+max}{2}\rfloor + 1, max]$.

It follows from the definition that the wavelet tree for $\GBWT$ has $o(n)$ leaves and, consequently, $o(n)$ vertices. As there are at most $O(\log n)$ levels in the tree and the total length of the bit vectors is at most $\frac{n}{r}$, the wavelet tree for $\GBWT$ occupies $o(n) + O(\frac{n}{r} \log n) = O(n \log \sigma)$ bits of space in total.

We define a meta-character $c_{min}$ as follows: reverse the bit representation of $Y$ and then append $(r-|Y|)\log{\sigma}$ zeros to it. A meta-character $c_{max}$ is defined in a similar way, but ones are appended instead of zeros. Obviously, a block border $pr+1$ is preceded by an occurrence of $Y$ iff the reverse of the bit representation of $W'[p-1]$ lies in the interval $[c_{min},c_{max}]$. Let $B^Y_I$ be the set of block borders corresponding to the leaves with ranks in $I$ and preceded by an occurrence of $Y$. First the procedure \exist{} finds $p_1, p_2 \in I$ such that $\GBWT[p_1], \GBWT[p_2] \in [c_{min},c_{max}]$, and then computes the block borders $b_1, b_2$ corresponding to leaves with ranks $p_1, p_2$. Obviously, $\B^Y_I$ contains at least one block border different from $b$ iff either $b_1$ or $b_2$ is not equal to $b$.

The dynamic wavelet tree for $\GBWT$ allows to find $p_1, p_2 \in I$ such that $\GBWT[p_1], \GBWT[p_2]$ belong to $[c_{min},c_{max}]$ in $O(\log^2 n)$ time~\cite{Makinen06position-restrictedsubstring}. We start at the interval $[start, end] = I$ of $V_{root}$. Now we map the interval to the left and to the right, replacing $start$ by $rank_{0/1}(start - 1,V_{root}) + 1$ and $end$ by $rank_{0/1}(end,V_{root})$, and continue recursively. We stop the recursion (i) if the interval $[start,end]$ is empty; (ii) if the interval corresponding to the current vertex does not intersect with the interval $[c_{min},c_{max}]$; (iii) if the interval corresponding to the current vertex is contained in $[c_{min},c_{max}]$. It can be shown that only $O(\log n)$ vertices will be visited. Suppose that the traverse stops at some vertex $u$ because of (iii) and $\GBWT_u$ is the corresponding subsequence of $\GBWT$. Then elements $\GBWT_u[start], \GBWT_u[start+1], \ldots, \GBWT_u[end]$ belong to the interval $[c_{min},c_{max}]$, and their positions in $\GBWT$ belong to $I$. Any of these positions, e.g., the position of an element $\GBWT_u[k]$ in $\GBWT$, can be computed in $O(log^2 n)$ time in the following way: we go along the path from $u$ to the root replacing $k$ by $select_{0/1}(k, V_p)$ when moving from the left (right) son of a vertex $p$ to $p$. The value of $k$ at the root will be equal to the position of the element $\GBWT_u[k]$ in $\GBWT$.

Using $BT$ the block borders corresponding to the leaves of the suffix tree with ranks $p_1$ and $p_2$ can be computed in $O(\log n)$ time, since it is enough to find the corresponding suffix leaves of~$BT$. Hence, $O(\log^2 n)$ time is sufficient to determine whether a set of block borders corresponding to the starting positions of the suffixes which are represented by the leaves of the suffix tree with ranks in the interval $I$ contains a block border different from $b$ and preceded by an occurrence of $Y$.

It remains to describe how the wavelet tree is updated. To add a new element between $\GBWT[i]$ and $\GBWT[i+1]$ we need $O(\log^2 n)$ time, because we need to create at most $\log n$ vertices and to add a new bit to $O(\log n)$ binary vectors. To update the wavelet tree after adding a new leaf to the suffix tree we first compute the rank of this leaf in the left-to-right order on the leaves of the suffix tree in $O(\log{n})$ time using $BT$ and then add the corresponding element to $\GBWT$.

\begin{lemma}
\label{lm:ST}
The suffix tree and additional data structures occupy $O(n \log{\sigma})$ bits and their maintenance takes $O(n \log^2{n})$ time.
\end{lemma}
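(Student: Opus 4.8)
The plan is to account separately for (a) the suffix tree itself, (b) the tries $CT_v$ stored in its explicit vertices, (c) the Euler-tour list $EL$ together with the balanced tree $BT$, and (d) the wavelet tree for $\GBWT$; then add up the space and the maintenance costs. For the space bound: the suffix tree of $W'[1..t]$ has $O(n/r)$ explicit vertices and edges, each edge label being stored as a pair of pointers into $W'$, which costs $O((n/r)\log n) = O(n\log\sigma)$ bits. The tries $CT_v$ together contain one leaf per edge of the suffix tree, hence $O(n/r)$ leaves in total and $O((n/r)\cdot r) = O(n)$ vertices (each edge of a $CT_v$ is labelled by a word of length at most $r$, giving $O(n\log\sigma)$ bits for the labels as well, or one simply stores the words themselves in $O(n\log\sigma)$ bits). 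The list $EL$ and the tree $BT$ have $O(n/r)$ nodes, each carrying a counter of $O(\log n)$ bits, for $O((n/r)\log n) = O(n\log\sigma)$ bits. Finally, by the computation already carried out in the text, the wavelet tree for $\GBWT$ occupies $O(n\log\sigma)$ bits. Summing the four contributions gives the claimed $O(n\log\sigma)$ bits.

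For the time bound I would argue that each of the $O(n/r)$ updates to the suffix tree — triggered by reading one new block of $W$ — costs $O(\log^2 n)$ amortized. Ukkonen's algorithm performs $O(n/r)$ elementary operations in total over the whole run (amortizing the walk of the active point); each such operation that descends from a vertex $v$ requires locating the outgoing edge whose label begins with a given meta-character, which by the remark preceding Section~\ref{sec:<r} costs $O(r\log\sigma) = O(\log n)$ via $CT_v$. Creating a new edge out of $v$ means inserting a length-$r$ word into $CT_v$, again $O(r\log\sigma) = O(\log n)$. Each newly created suffix-tree vertex forces: one insertion into $EL$, a rebalancing of $BT$, and, for a leaf, a root-to-leaf counter update in $BT$ — all $O(\log n)$ as established in the preceding subsection. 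A new leaf also forces one insertion into the wavelet tree for $\GBWT$, which costs $O(\log^2 n)$ as shown above (creating up to $\log n$ vertices and adding a bit to $O(\log n)$ bit-vectors, each bit-insertion costing $O(\log n)$). Multiplying the dominant $O(\log^2 n)$ per-update cost by the $O(n/r)$ updates, and recalling $r = \frac{\log_\sigma n}{4} = \Theta(\log n / \log\sigma) \ge 1$, yields $O((n/r)\log^2 n)$; since $n/r \le n$ this is $O(n\log^2 n)$, which is the announced bound. (One could state the sharper $O((n/r)\log^2 n) = O(n\log\sigma\log n)$, but $O(n\log^2 n)$ already suffices for the theorem.)

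The part that needs the most care is justifying that the number of suffix-tree updates — and hence the number of wavelet-tree insertions, the costliest operation — is $O(n/r)$ rather than $O(n)$. This is exactly because the suffix tree is built over the meta-word $W'$ of length $t \le n/r$, so it has at most $2t = O(n/r)$ explicit vertices over the entire execution, and Ukkonen's algorithm creates each of them once; the amortized cost of advancing the active point is classical. The remaining steps are bookkeeping: one must check that reading the blocks of $W$ and computing the meta-characters contributes only $O((n/r)\cdot r) = O(n)$ further time, which is dominated. I would close by remarking that the per-block update time is in fact $O(\log^2 n)$ in the worst case, so the data structures are indeed updated within the time budget allotted to a single step of the online algorithm.
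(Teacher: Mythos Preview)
Your argument follows the same four-part decomposition as the paper's own proof and reaches the same bounds by essentially the same accounting. Two small corrections are worth making. First, the compacted tries $CT_v$ have $O(n/r)$ vertices in total, not $O((n/r)\cdot r)=O(n)$: a compacted trie with $k$ leaves has $O(k)$ internal nodes, and this is what keeps the pointer storage at $O((n/r)\log n)=O(n\log\sigma)$ bits (with $O(n)$ vertices you would need $O(n\log n)$ bits just for the tree structure). Second, your closing remark that the per-block update time is $O(\log^2 n)$ \emph{in the worst case} is not correct: Ukkonen's algorithm is only amortized linear in the length of $W'$, and appending a single meta-character can create many new vertices; the lemma, however, only requires the total maintenance bound, which your main argument establishes correctly.
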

\begin{proof}
The suffix tree has at most $\frac{n}{r}$ leaves and therefore $O(\frac{n}{r})$ edges. We specify labels of edges by their starting and final positions in $W'$. Hence, the suffix tree occupies $O(n \log{\sigma})$ bits.

Tries in vertices of the suffix tree have $O(\frac{n}{r})$ leaves in total and occupy $O(n \log{\sigma})$ bits as well (labels of edges are specified by their starting and final positions in $W$). Finally, $BT$, $EL$ and the dynamic wavelet tree use $O(\frac{n}{r}\log{n}) = O(n \log\sigma)$ bits of space.

Ukkonen's algorithm~\cite{ukkonen:on-line} takes $O(\frac{n}{r}\log{n}) = O(n \log\sigma)$ time (additional $\log n$ appears because of the cost of navigation). To update tries in the vertices of the suffix tree we need $O(\frac{n}{r} \log n) = O(n\log\sigma)$ time. All wavelet tree updates take $O(\frac{n}{r} \log^2{n}) = O(n \log{n}\log\sigma) = O(n \log^2{n})$ time. And finally, updates of $BT$ and $EL$ take $O(\frac{n}{r}\log{n}) = O(n \log\sigma)$ time.
\end{proof}

\section{Results and Conclusions}
To conclude, we prove the following theorem. 

\begin{theorem}
  The presented algorithm computes the Lempel-Ziv factorization of a word $W$ in $O(n\log^2{n})$ time and $O(n \log{\sigma})$ bits of space.
\end{theorem}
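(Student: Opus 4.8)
The plan is to establish the three assertions of the theorem separately: that the algorithm outputs the correct factorization, that it uses $O(n\log\sigma)$ bits, and that it runs in $O(n\log^2 n)$ time. The online character of the algorithm follows from the remark opening Section~2 together with the fact that both $P_{<r}$ and $P_{\geq r}$ only ever advance through $W$ one block at a time.

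For correctness I would argue by induction on $i$ that, given $f_1,\dots,f_{i-1}$, the pair of procedures returns the correct $f_i$. By the analysis of Section~\ref{sec:<r}, $P_{<r}$ locates the longest prefix of $W[\ell_i+1..\ell_i+r]$ having an occurrence that starts at a position $\le\ell_i$, and the length of this prefix is exactly $\min(|f_i|,r)$; hence $P_{<r}$ either outputs $|f_i|$ (including the value $1$ in the degenerate case of a fresh character, detected when the empty prefix is maximal) or certifies $|f_i|\ge r$. In the latter case $P_{\geq r}$ is invoked; its first step is correct by Lemma~\ref{lm:prelim} and the following paragraph, which guarantee $M\le|f_i|$ and that $\ell_i+M$ stays within $r$ of the last read position, and its second step is correct because of the identity $|f_i|=\max_{m\in[1,r]}(|f_i^m|+m-1)$ together with Lemma~\ref{lm:B_v_prop}, which justifies replacing ``a block border $<\ell_i+1$'' by ``a block border $\ne(\ell_i'+1)r+1$'' in the \exist{} calls. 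Concatenating the $f_i$ gives back $W$, so the output is the LZ-factorization of $W$.

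For the space bound I would just add up the per-structure estimates already proved: the uncompacted trie occupies $o(n)$ bits (Lemma~\ref{lm:CT}); the sparse suffix tree together with the tries $CT_v$, the Euler-tour list $EL$, the balanced tree $BT$, and the dynamic wavelet tree for $\GBWT$ occupy $O(n\log\sigma)$ bits (Lemma~\ref{lm:ST}); a copy of $W$ (equivalently of $W'$) takes $O(n\log\sigma)$ bits; and the emitted factorization takes $O(z\log n)=O(n\log\sigma)$ bits, using the standard bound $z=O(n/\log_\sigma n)$ on the number of LZ phrases. The sum is $O(n\log\sigma)$.

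For the time bound I would separate data-structure maintenance from navigation. Maintenance over the whole run costs $O(n\log\sigma)$ for the trie (Lemma~\ref{lm:CT}) and $O(n\log^2 n)$ for the suffix tree and its satellite structures (Lemma~\ref{lm:ST}), and reading $W$ and forming meta-characters costs $O(n)$. For navigation, $P_{<r}$ spends $O(\min(|f_i|,r)\log\sigma)=O(|f_i|\log\sigma)$ on $f_i$, so these terms sum to $O(\log\sigma\sum_i|f_i|)=O(n\log\sigma)$; $P_{\geq r}$ runs only when $|f_i|\ge r$, in which case Lemma~\ref{lm:second_step} gives cost $O(|f_i|\log^2 n+r\log^2 n)=O(|f_i|\log^2 n)$, and these sum to $O(\log^2 n\sum_i|f_i|)=O(n\log^2 n)$. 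Adding the pieces yields $O(n\log^2 n)$. The step that needs the most care is precisely this amortization: the additive ``$+r$'' overheads --- the extra characters read in $P_{<r}$ and in the first step of $P_{\geq r}$, and the $r\log^2 n$ term of Lemma~\ref{lm:second_step} --- must be charged only to phrases with $|f_i|\ge r$, so that $r\le|f_i|$ lets each overhead be absorbed into $|f_i|\log^2 n$ and $\sum_i r$ over such phrases is at most $\sum_i|f_i|\le n$; one also has to observe that the short phrases and the at most $\sigma\le n$ single-character phrases contribute only $O(n\log\sigma)$, and that every block of $W$ is read and every data-structure update is performed exactly once across all phrase computations because the index of the last read block is nondecreasing in $i$.
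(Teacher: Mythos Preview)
Your proposal is correct and follows essentially the same route as the paper: invoke Lemmas~\ref{lm:CT} and~\ref{lm:ST} for the space and maintenance cost, bound $P_{<r}$ by $O(|f_i|\log\sigma)$ and $P_{\geq r}$ by Lemma~\ref{lm:second_step}, and sum over phrases. The only cosmetic difference is that the paper handles the additive $r\log^2 n$ term by noting that $P_{\geq r}$ is called at most $n/r$ times (so the overheads sum to $n\log^2 n$), whereas you absorb $r$ into $|f_i|$ using $|f_i|\ge r$; both are equivalent, and your added correctness discussion simply makes explicit what the paper leaves to the algorithm description.
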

\begin{proof}
Lemmas~\ref{lm:CT} and~\ref{lm:ST} guarantee that the data structures occupy $O(n\log{\sigma})$ bits of space in total and that their maintenance takes $O(n \log^2{n})$ time.

To compute $f_i$, first $P_{<r}$ is run. As we have proved, it takes $O(|f_i| \log \sigma)$ time. $P_{\geq r}$ is run only when $|f_i| \geq r$ (i.e., at most $\frac{n}{r}$ times) and takes $O((|f_i|+r)\log^2{n})$ time. Therefore, the total time spent by procedures $P_{<r}$ and $P_{\geq r}$ is $O(n \log^2 n)$, and this completes the proof.
\end{proof}

It is easy to see that the described algorithm can be implemented online with the same running time and space.

\paragraph{Acknowledgement} 
The author has been supported by a grant 10-01-93109-CNRS-a of the Russian Foundation for Basic Research and by the mobility grant funded by the French Ministry of Foreign Affairs through the EGIDE agency.

The author thanks Simon J. Puglisi, who proposed to consider the problem of computation of the Lempel-Ziv factorization on a sparse suffix tree, and Gregory Kucherov and Alexei Lvovich Semenov for very helpful discussions.

\bibliographystyle{plain}
\bibliography{main}

\end{document}